\begin{document}

\newtheorem{lemma}{Lemma}{\bf}{\it}
\newtheorem{theorem}{Theorem}{\bf}{\it}

\title{A New Approach to GraphMaps, a System Browsing\\ Large Graphs as Interactive Maps}

\author{\authorname{Debajyoti Mondal\sup{1}  and Lev Nachmanson\sup{2}}
\affiliation{\sup{1}Department of Computer Science, University of Saskatchewan, Saskatoon, SK, Canada}
\affiliation{\sup{2}Microsoft Research, Redmond, WA,  U.S.A.}
\email{dmondal@cs.usask.ca, levnach@microsoft.com}
}

\keywords{Network Visualization, Layered  Drawing, Geometric Spanners, Competition Mesh, Network Flow }


\abstract{A GraphMaps is a system that visualizes a graph using zoom levels, which is similar to a geographic map visualization. GraphMaps reveals the structural properties of the graph and enables users to explore the graph in a natural way by using the standard zoom and pan operations. The available implementation of GraphMaps faces many challenges such  as  the number of zoom levels may be large, nodes may be unevenly distributed to different levels,   shared edges may create ambiguity due to the selection of multiple  nodes. In this paper, we develop an algorithmic framework to construct GraphMaps from any given mesh (generated from a 2D point set), and for any given number of zoom levels. We demonstrate our approach introducing competition mesh, which is simple to construct, has a low dilation and high angular resolution. We present an algorithm for assigning nodes to zoom levels that minimizes the change in the number  of nodes on visible on the screen while the user zooms in and out between the levels. We think that keeping this change small facilitates smooth browsing of the graph. We also propose  new node selection techniques to cope with some of the challenges of the GraphMaps approach.}

\onecolumn \maketitle \normalsize \vfill

\section{\uppercase{Introduction}}
 
Traditional data visualization systems render all the vertices and edges of the graph on a single screen. For large graphs, this approach requires rendering many objects on the screen, which overwhelms the user. A GraphMaps 
 system confronts the challenge of visualizing large graphs by enabling the users to browse the graphs as interactive maps. Like Google or Bing Maps, a GraphMaps system visualizes the high priority features on the top level, and as we zoom in, the low priority entities start to appear in the subsequent levels. 
To achieve this effect,  for a given graph $G$ and a positive integer $k > 0$,  GraphMaps creates the graphs $G_1, G_2,\ldots, G_k$, where $G_i$, $1 \le i < k$, is an induced subgraph of $G_{i+1}$, and $G_k = G$. 

\begin{figure*}[pt]
  \centering
  \includegraphics[width=.7\textwidth]{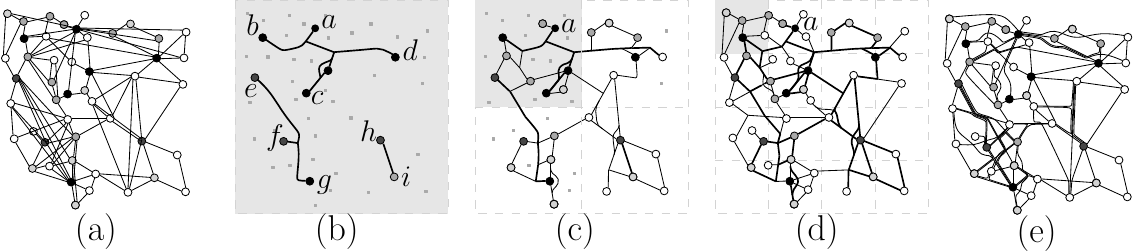}  
  \caption{(a)  A 
 node-link diagram of a graph $G$. (b--d) A GraphMaps visualization of $G$. (e) An example of edge bundling.
}
\label{fig:introduction}
\end{figure*}

The graph $G_i$, where $ 1\le i\le k$, corresponds to the $i$th zoom level. Assume that the nodes of $G$ are ranked by their importance. The discussion on what a node importance is and how the ranking is obtained, is out of the scope of the paper, but by default GraphMaps uses Pagerank~\cite{bp12} to obtain such a ranking.  Let $V(G_i)$ be the nodes of $G_i$. We build graphs $G_i$ in such a way that the nodes of $G_i$ are equally or more important than the nodes of $V(G_{i+1}) \setminus V(G_i)$. At the top view, we render the graph $G_1$. As we zoom in and the zoom reaches $2^{i-1}$,  the rendering switches from $G_{i-1}$ to $G_i$, exposing less important nodes and their incident edges. To create spatial stability GraphMaps keeps the node positions fixed, and the rendering of edges changes incrementally between $G_i$ and $G_{i+1}$ as described in Section~\ref{sec:cm}.

By browsing a graph with GraphMaps, the user obtains  a quick overview of the important elements. Navigation through different zoom levels reveals the structure of the graph. 
In addition, users can interact with the system. For example, when the user clicks on a node $u$, the visualization highlights and renders all neighbors of $u$ (even those that do not belong to the current $G_i$) and the edges that connect $u$ to its neighbors. By using this interaction the user can explore a path  by selecting a set of successive nodes on the path, and can answer adjacency questions by selecting the corresponding pair of nodes.   
 

We draw the nodes as points, and edges as polygonal chains. Each maximal straight line segment in the drawing is called a \emph{rail}. The edges may share rails. 
 Every point where a pair of rails meet is either a node  or a point which we call a \emph{junction}. Figure~\ref{fig:introduction}(a) depicts a traditional node-link diagram of a graph $G$. Figures~\ref{fig:introduction}(b--c) illustrate a GraphMaps visualization of $G$ on three zoom levels. The gray region at each level corresponds to a  viewport in that level. The higher ranked nodes of $G$ have the darker color. The tiny gray dots represent the locations of the nodes that are not visible in the current layer. Figures~\ref{fig:zoom} illustrate the node selection technique and the zoom feature. The rails rendered by thick lines correspond to the shortest paths from the selected nodes $a$ and $h$ to their neighbors. 

\begin{figure}[h]
  \centering
  \includegraphics[width=\columnwidth]{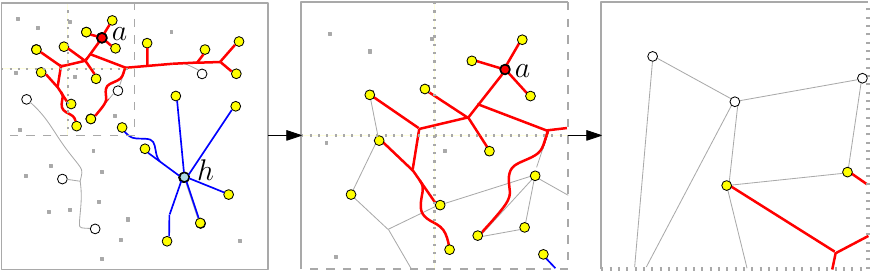}  
  \caption{Node selections and zoom in.
}
   \label{fig:zoom}
\end{figure}

In our scheme, where we change the rendering depending on the zoom level, the quality of the visualization depends both on the quality of the drawing on each level, and the differences between the drawings of successive zoom levels. We think that a good drawing of a graph on a single zoom level satisfies the following properties:
\begin{enumerate}
\item[-] The angular resolution is large
\item[-] The \emph{degree} at a node or at a  junction is small. 
\item[-] The amount of \emph{ink}, that is the sum of the lengths of all distinct rails used in the drawing, is small. 
\item[-] The \emph{edge stretch factor} or \emph{dilation}, that is the ratio of the length of an edge route to  the  Euclidean distance between its end nodes, is small.  
\end{enumerate}

\noindent
These properties help to follow the edge routes, reduce the visual load, and thus improve the readability of a drawing. Since some of the principles contradict each other, optimizing all of them simultaneously is a difficult task. 

Our algorithm, in addition to creating a  good drawing of each $G_i$, attempts to   construct these drawings in a way that a  switch from $G_i$ to $G_{i+1}$ does not cause a large change on the screen. We try to keep the amount of new appearing details relatively small and also try to keep the edge geometry stable. 

\begin{figure*}[pt]   
\centering
\smallskip 
{\includegraphics[width=.3\textwidth, height = 3cm]{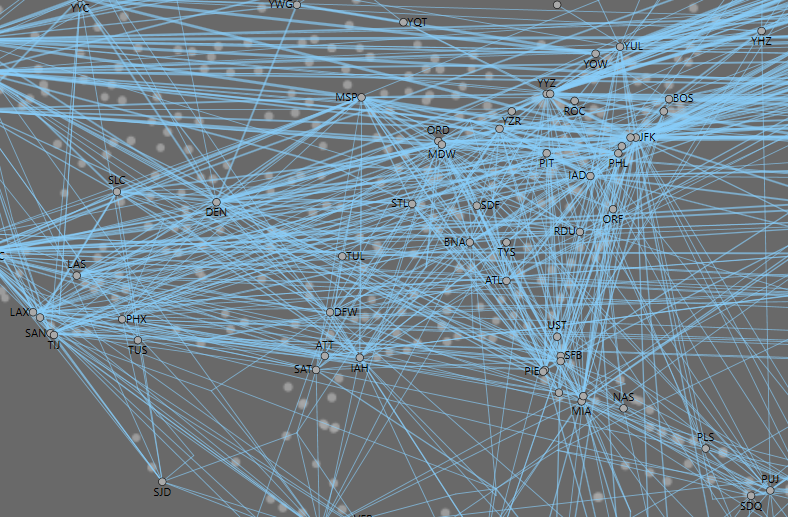}}
\hfill
{\includegraphics[width=.3\textwidth, height = 3cm]{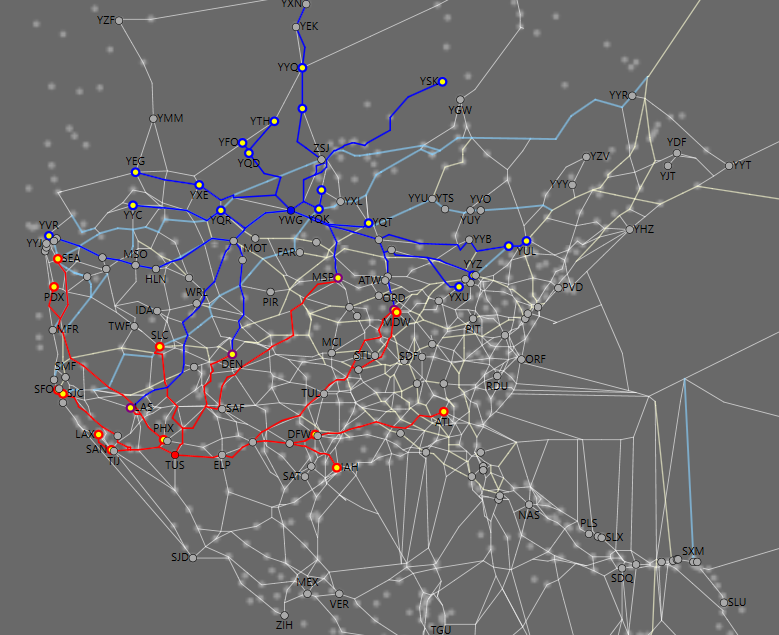}}
\hfill
{\includegraphics[width=.3\textwidth, height = 3cm]{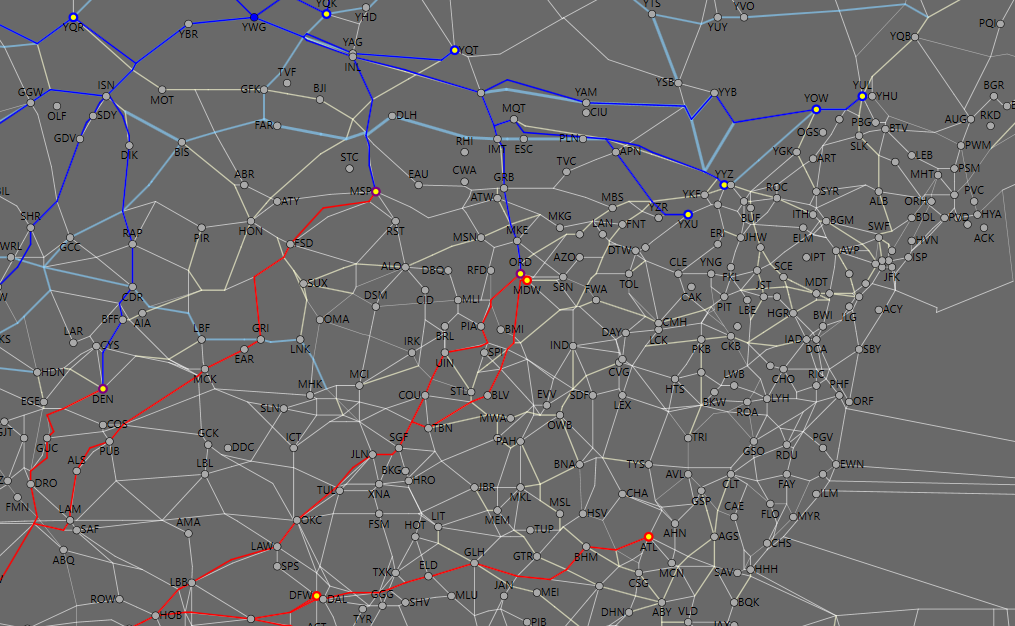}}
\caption{A partial display of a flight network with approximately 3K nodes and 19K edges. (left) Traditional node-link diagram over the airports of North America. (middle) The  top-level of a GraphMaps visualization based on our approach, where the airports TUS and YWG are selected. (right) A view after zoom in.}
\label{abs}
\end{figure*} 
\subsection{Related Work}   A large number of   graph visualization tools,
 e.g.,  Centrifuge, Cytoscape, Gephi, Graphviz, Biolayout3d, 
 have been developed over the past few decades due to a growing interest in exploring network data. 
 A good visualization requires the careful placement of nodes, 
  e.g., sometimes nodes with  similar properties are  placed close to each other, whereas the nodes that are dissimilar are  placed far away. Force directed approaches, multi-dimensional scaling and stochastic neighbor embedding are some common techniques to  generate the node positions~\cite{Hu,KlimentaB12,MaatenH08}.   Techniques that try to make the  visualization readable by drawing the  edges carefully  consider various types of edge bundling~\cite{DBLP:journals/tvcg/ErsoyHPCT11,GansnerHNS11,LambertBA10,PupyrevNBH11} and edge routing techniques~\cite{DobkinGKN97,HoltenW09,DwyerN09}.   
  Informally, the edge bundling technique groups the edges that are travelling towards a common direction, and routes these edges through some narrow tunnel. Figure~\ref{fig:introduction}(e) illustrates an example of edge bundling. 
  Other forms of clutter reduction approaches include node aggregation~\cite{DBLP:conf/chi/Wattenberg06,DBLP:conf/chi/DunneS13,DBLP:journals/tvcg/ZinsmaierBDS12}, topology compression~\cite{DBLP:conf/bigdataconf/ShiLSCL13,DBLP:journals/jgaa/BrunelGKRW14}, and sampling algorithms~\cite{DBLP:conf/globecom/GaoHL15}. 
  
This paper focuses on GraphMaps, proposed by Nachmanson et al.~\cite{Nachmanson15}, that reduces clutter by distributing nodes to different zoom levels and routing edges on shared rails.  Like the clutter reduction approaches, a primary goal of GraphMaps is to make the visualization more readable and interactive in the higher levels of abstraction. Nachmanson et al.~\cite{Nachmanson15}  use multidimensional scaling  to create the node positions. To distribute the nodes into zoom levels, they consider at each level $i$, an uniform   $2^i\times 2^i$ grid, where each grid cell is called a \emph{tile}. The tiles are filled with nodes, the most important nodes first. While filling the levels with nodes, they maintain a node and a rail quota that bound the number of nodes and rails intersecting a tile.  Whenever an insertion of a new node creates a tile intersecting more than one fourth of the node quota nodes or more than one quarter of rail quota rails, a new zoom level is created to insert the rest of the entities. The visualization of GraphMaps works in such a way that each viewport is covered by four tiles of the current level. This ensures that not more than the node quota nodes and the rail quota rails are rendered per viewport.


GraphMaps visualization also relates to the hierarchical visualization of  clustered networks~\cite{DBLP:journals/tochi/SchafferZGBDDR96,DBLP:conf/apvis/BalzerD07}. We refer the reader to the survey~\cite{eurovisstar.20151110} for more details on visualizing graphs based on graph partitioning. There exist some systems that render large graphs on multiple layers by using the notion of temporal graphs, e.g., evolving software systems~\cite{CollbergKNPW03,LambertBA10}. A generalization of stochastic neighbor embedding renders nodes on multiple maps~\cite{MaatenH12}. Gansner et al.~\cite{GansnerHK10} proposed a visualization that emphasizes node clusters as geographic regions.   

\subsection{Contribution} The existing implementation of GraphMaps~\cite{Nachmanson15} focuses mainly on the quality of the layout at individual zoom level.
 The construction follows a top-down approach, where the successive levels were obtained by inserting nodes incrementally in a greedy manner. 

We propose an algorithm to construct a GraphMaps visualization  starting from a complete drawing of the graph $G(=G_k)$ at the bottom level. Specifically, given an arbitrary mesh and the edge routes of $G_k$ on this mesh, our method builds the edge routes for $G_{k-1},\ldots, G_{2}, G_1$, in this order.
 %
 %
We introduce a particular type of mesh, called competition mesh, which is of independent interest due to its low edge stretch factor $(2+\sqrt{2})$, and high angular resolution $45^\circ$. We then construct GraphMaps visualizations by applying our algorithm to this mesh. 
We develop a node assignment algorithm that minimizes the change in the drawing when switching from $G_i$ to $G_{i+1}$ during zoom in, where $1\le i <k$. 
Moreover, we propose new node selection techniques to cope with some of the challenges of the GraphMaps approach. 


We also carried out experiments on some real-life datasets (see Figure~\ref{abs} and  Section~\ref{EXPERIMENTS}). Our experiments reveal the usefulness of GraphMaps, even in its basic implementations,  for understanding the network information through interactive exploration.
 
\section{\uppercase{Technical Background}}
\label{section:TB}

We now introduce the mesh that we use for edge routing and analyze its properties.  Let $P$ be a set of $n$ distinct points that correspond to the node positions, and let $R(P)$ be the smallest axis aligned rectangle that encloses all the points of $P$. A \emph{competition mesh} of $P$ is a geometric graph constructed by shooting from each point, four axis-aligned rays at the same constant speed (towards the top, bottom, left and right), where each ray stops as soon as it hits any other ray or $R(P)$.  We break the ties arbitrarily, i.e., if two non-parallel rays hit each other simultaneously, then arbitrarily one of these rays stops and the other ray continues.  If two rays are collinear  and hit each other from the opposite sides, then both rays stop. We denote this graph by $M(P)$, e.g., see Figure~\ref{fig:competitionmesh}. The vertices of $M(P)$ are the points of $P$ (\emph{nodes}), and the points where a pair of the rays meet (\emph{junctions}).   Two vertices in $M(P)$ are adjacent if and only if the straight line segment connecting them belongs to $M(P)$. A competition mesh can also be viewed as a variation of a motorcycle graph~\cite{EppsteinGKT08}, or a geometric spanner with Steiner points~\cite{BoseS13}. 
 In the rest of the paper the term `vertices' denotes all the nodes and junctions of $M(P)$.

\begin{figure*}[pt]
  \centering
  \includegraphics[width=.75\textwidth]{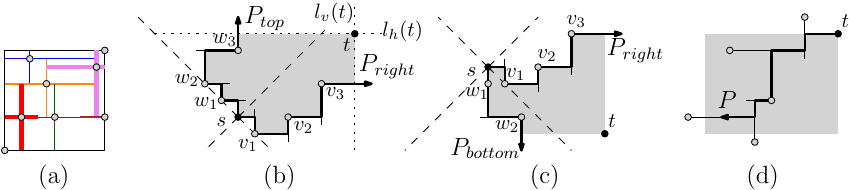}  
  \caption{(a) A point set and its corresponding competition mesh. (b--c) Bounding the bottom-left quadrant of $t$. (d) A monotone path inside the bottom-left quadrant of $t$.
  }
  \label{fig:competitionmesh}
 \end{figure*}

For any point $u$, let $u_x$ and $u_y$ be the $x$ and $y$-coordinates of $u$, respectively, and  let $l_v(u)$ and $l_h(u)$ be the vertical and horizontal straight lines through $u$, respectively.   For any two points $p,q$ in $\mathbb{R}^2$, let $dist_E(p,q)$ 
  be the Euclidean distance 
 between $p$ and $q$. For each point $u$ of the plane we define four \emph{quadrants}  formed by the horizontal and vertical lines passing through $u$. A path $v_1,\dots, v_k$ is \emph{monotone} in direction of vector $\textbf{a}$ if for each $1 \le i < k$ the dot product $\textbf{a}\cdot (v_{i+1} - v_i)$ is not negative. Lemmas~\ref{monotone}--\ref{dilation} prove some properties of $M(P)$. 

\begin{lemma} 
\label{monotone}
Let $u$ be a node in $M(P)$. Then in each quadrant of $u$ there is a path in $M(P)$ that starts at $u$, ends at some point on $R(P)$, and is monotone in both horizontal and vertical directions.  
\end{lemma}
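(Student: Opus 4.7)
The plan is to construct the desired path greedily: start at $u$ and at each vertex take an outgoing edge whose direction lies strictly inside the chosen quadrant. By symmetry it suffices to treat the top-right (``northeast'') quadrant, in which case at each step I would take an edge pointing strictly east or strictly north. Any such walk is automatically monotone in both coordinates, and each step strictly increases either the $x$- or the $y$-coordinate, so the walk visits each vertex at most once and must terminate in the finite mesh $M(P)$.

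The core of the proof is then the claim that this walk can terminate only at a vertex on $R(P)$, which follows once I show that every vertex of $M(P)$ strictly inside $R(P)$ has an incident edge going east or north. For a node of $P$ this is immediate, since every node shoots axis-aligned rays in all four directions, and the initial segment of each of these rays is an incident edge. For a junction $j$ I would distinguish two cases based on how the two rays creating $j$ meet. In a head-on collision the two colliding rays are anti-parallel and both stop at $j$, so $j$ has exactly two incident edges pointing in opposite directions along the common line, and one of them is always east or north. In a perpendicular collision the tie-breaking rule forces exactly one of the rays to stop at $j$ and the other, say $r_c$, to continue through $j$; then $r_c$ contributes incident edges on both sides of $j$ along its axis, and since the direction of $r_c$ is one of the four compass directions, one of those two edges always points east or north.

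Combining the two observations, the greedy walk is guaranteed to reach some point of $R(P)$, and the resulting path is monotone in both the horizontal and the vertical directions, proving the lemma for the northeast quadrant; the remaining three quadrants follow by reflecting the argument. I expect the main subtlety to lie in the perpendicular-junction case: careful use of the tie-breaking rule is needed to be sure that the stopped ray really does leave the continuing ray intact at $j$ and that both forward and backward edges along $r_c$ are present in $M(P)$. A minor sanity check is also that no ray is blocked at distance zero from its source, which is ensured by the assumption that the points of $P$ are distinct and so no two rays share a starting position.
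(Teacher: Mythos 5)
Your proposal is correct and is essentially the paper's own argument in contrapositive form: the paper takes a maximal $xy$-monotone path and shows it can always be extended at any interior node or junction (using exactly your node/junction case analysis of which ray stops and which continues), which is the same as your claim that every interior vertex has an outgoing east or north edge. The only place the paper is marginally more careful is the subtlety you yourself flag: it explicitly handles the case where the ``continuing'' perpendicular ray is itself stopped at the junction by a third ray, in which case one follows the edge toward that third ray's source.
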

\begin{proof}
Without loss of generality it suffices to prove the lemma for  the first quadrant of $u$, which consists of the set of points $v$ such that $v_x\ge u_x$ and $v_y\ge u_y$.   Suppose for a contradiction that there is no such path in this quadrant. Consider a maximal $xy$-monotone path $\Pi$ that starts at $u$ and ends at some node or junction $w$ of $M(P)$. If $w$ is a node, then we extend $\Pi$ using the right or top ray of $w$, which is a contradiction. Therefore, $w$ must be a junction in $M(P)$. Without loss of generality assume that the straight line segment $\ell$ incident to $w$ is horizontal. Since $\Pi$ is a maximal $xy$-monotone path, the ray $r_\ell$ corresponding to $\ell$  must be stopped by some vertical ray $r'$ generated by some vertex $w'$. Observe that $w_y> w'_y$, otherwise we can extend $\Pi$ towards $w'$. Since $r_\ell$ is stopped, the ray $r'$ must continue unless there are some other downward ray $r''$ that hits $r'$ at $w$. In both cases we can extend $\Pi$, either by following $r'$ (if it continues), or following the source of $r''$ (if $r'$ is stopped by $r''$), which contradicts to the assumption that $\Pi$ is maximal. 
\end{proof}

  
\begin{lemma} 
\label{dilation}
For any set $P$ with $n$ points  $M(P)$ has $O(n)$  vertices and  edges. The graph distance  between any two nodes of $M(P)$ is at most $ (2+\sqrt{2})$ times the Euclidean distance. 
\end{lemma}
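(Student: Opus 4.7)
My plan addresses the two assertions in turn, using Lemma~\ref{monotone} as the main combinatorial and metric tool. For the size bound I would simply count: each of the $n$ nodes emits exactly four rays, and each ray terminates either on the boundary of $R(P)$ or at a single new junction, so at most $4n$ junctions are created and $|V(M(P))|\le 5n$. Since every node has degree $4$ and every junction has degree at most $4$ (using the convention for breaking simultaneous-collision ties), the handshaking lemma immediately yields $|E(M(P))|=O(n)$.

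For the dilation, fix two nodes $u,v\in P$ and, without loss of generality, assume that $v$ lies in the first (top-right) quadrant of $u$. Set $a=v_x-u_x\ge 0$ and $b=v_y-u_y\ge 0$, so $dist_E(u,v)=\sqrt{a^2+b^2}$ while the Manhattan distance between $u$ and $v$ is $a+b\le \sqrt{2}\,dist_E(u,v)$. Therefore, to obtain the claimed dilation of $2+\sqrt{2}$ it suffices to construct a $u$-to-$v$ path in $M(P)$ of length at most $(1+\sqrt{2})(a+b)$, i.e.\ within a constant factor of the Manhattan distance. The construction plan is to work inside the axis-aligned rectangle $B=[u_x,v_x]\times[u_y,v_y]$ and splice together monotone sub-paths supplied by Lemma~\ref{monotone}. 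Concretely, I would apply Lemma~\ref{monotone} at $v$ in its third quadrant to obtain an $xy$-monotone decreasing path $\Pi_v$ from $v$ to $R(P)$, apply it at $u$ in its first quadrant to obtain an $xy$-monotone increasing path $\Pi_u$ from $u$ to $R(P)$, and truncate each path at the first point where it leaves $B$.

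If the truncation of $\Pi_v$ ends at $u$ itself, then monotonicity yields a single path of length at most $a+b$ and we are done. Otherwise, the truncation of $\Pi_v$ ends at some point $q$ on the left edge $l_v(u)$ or the bottom edge $l_h(u)$ of $B$, and symmetrically the truncation of $\Pi_u$ ends at some point on the top or right edge of $B$. I would then bridge $\Pi_u$ and $\Pi_v$ either by exhibiting a direct intersection inside $B$ (via a planar Jordan-curve argument exploiting that the two monotone pieces travel in opposite diagonal directions), or, when they superficially miss each other by escaping through opposite sides of $B$, by an additional application of Lemma~\ref{monotone} at the vertex that generated the obstructing ray. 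In every case the resulting concatenation stays inside or close to $B$ and its length can be bounded by $2(a+b)$, which fits comfortably within the budget $(1+\sqrt{2})(a+b)$; converting back via $a+b\le\sqrt{2}\,dist_E(u,v)$ gives the dilation $2+\sqrt{2}$. The main obstacle I anticipate is the bridging step: verifying that the two monotone paths cannot simultaneously escape $B$ through opposite sides without yielding a usable meeting point in $M(P)$, and organising the several sub-configurations of how horizontal and vertical rays block one another so that each one reduces to a case already covered by Lemma~\ref{monotone}.
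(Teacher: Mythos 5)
Your count of vertices and edges is fine and matches the paper's (the paper caps junctions by the number of rays and invokes planarity where you invoke the degree bound; both work). The dilation argument, however, has a genuine gap exactly at the step you flag as the ``main obstacle'': two generic monotone paths obtained from Lemma~\ref{monotone}, one increasing from $u$ and one decreasing from $v$, need not meet inside $B=[u_x,v_x]\times[u_y,v_y]$. For instance, $\Pi_u$ may hug the bottom edge of $B$ and leave through the right edge near the bottom-right corner while $\Pi_v$ hugs the top edge and leaves through the left edge near the top-left corner; a Jordan-curve argument then gives nothing, because the exit point of $\Pi_v$ lies on the same side of the curve $\Pi_u$ as $v$ does. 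Lemma~\ref{monotone} only asserts existence of \emph{some} monotone path in each quadrant, with no control over which boundary edge it exits through, so there is no way to rule out this configuration without further structure. Your fallback (``an additional application of Lemma~\ref{monotone} at the vertex that generated the obstructing ray'') is not developed into an argument and it is unclear that it terminates or stays within the length budget.

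The paper closes this hole with an idea your plan does not contain: instead of two applications of Lemma~\ref{monotone}, it builds from $s$ two special paths $P_{right}$ and $P_{top}$ (or $P_{bottom}$) by repeatedly following a ray until it is stopped and then continuing from the vertex that stopped it. The inequality ``a ray is at least as long as any ray it stops'' forces these paths to stay inside the $45^\circ$ cones $C_{right}$ and $C_{top}$, so together they enclose the bottom-left quadrant of $t$; then a \emph{single} application of Lemma~\ref{monotone} at $t$ produces a decreasing path that is trapped and must cross one of them. The same cone confinement bounds the lengths of $P_{right}$ and $P_{top}$ within that quadrant by $2t_x$ and $2t_y$, yielding $t_x+t_y+2\max\{t_x,t_y\}\le(2+\sqrt{2})\,dist_E(s,t)$. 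Your length bookkeeping ($2(a+b)\le 2\sqrt{2}\,dist_E$) would actually beat this bound if the crossing existed, which is a further sign that the crossing cannot be taken for granted. To repair your proof you would need to replace the second application of Lemma~\ref{monotone} by an enclosure construction of this kind.
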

\begin{proof}
By construction of $M(P)$,  whenever a junction is created, one ray stops. Since $|P|=n$ there are at most $4n$ rails, and therefore we cannot have more than $4n$ junctions, that proves that the number of vertices in $M(P)$ is $O(n)$. Since $M$ is a planar graph, the number of edges is also $O(n)$. 

We now show that the ratio of the graph distance and the Euclidean distance between any two nodes of $M(P)$ is at most $(2+\sqrt{2})$. Let $C_{left}, C_{right},C_{top}$, and $C_{bottom}$ be the four   cones with the apex at $(0,0)$ determined by the lines $y=\pm x$.  Let $s$ and $t$ be two nodes in $M(P)$.  Without loss of generality assume that $s$ located at $(0,0)$ and $t$ lies on $C_{right}$. Consider now an $x$-monotone orthogonal path $P_{right}=(v_0,v_1,v_2,\ldots,v_q)$ in the mesh such that $v_0$ coincides with $s$, for each $0<i\le q$, $v_i$ is a node in $M(P)$ that stops the rightward ray of $v_{i-1}$, and $v_q$ lies on or to the right of $l_v(t)$, e.g., see  Figure~\ref{fig:competitionmesh}(b).   Suppose that $t$ is either above or below $P_{right}$. If $t$ is above $P_{right}$, then  consider a $y$-monotone path $P_{top}=(w_0,$ $w_1,w_2,\ldots,w_r)$ such that  $w_0$ coincides with $s$, for each $0<i\le r$, $w_i$ is a node in $M(P)$ that stops the upward ray of $v_{i-1}$, and $v_r$ lies on or above  $l_h(t)$.  If $t$ is below $P_{right}$, then define a path $P_{bottom}$ symmetrically, e.g., see Figure~\ref{fig:competitionmesh}(c). 

Without loss of generality assume that $t$ is above $P_{right}$. Observe that the paths $P_{top}$ and $P_{right}$ remain inside the cones $C_{top}$ and $C_{right}$, respectively, and  bound the bottom-left quadrant of $t$, as shown in the shaded region in Figure~\ref{fig:competitionmesh}(b). By Lemma~\ref{monotone}, $t$ has a  $(-x)(-y)$-monotone path $P$ that starts at $t$ and reaches the boundary of $R(P)$, e.g., see Figure~\ref{fig:competitionmesh}(d). This path $P$ must intersect either $P_{top}$ or $P_{right}$. Hence we can find a path $P'$ from $s$ to $t$, where $P'$ starts at $s$, travels along either $P_{top}$ or  $P_{right}$ depending on which one $P$ intersects, and then follows $P$ from the intersection point. We now show that length of $P'$  is at most  $(2+\sqrt{2})\cdot dist_E(s,t)$. 
Since any ray is not shorter than a ray it stops, the sum of the lengths of the vertical segments of $P_{right}$  is at most the sum of the lengths of the horizontal segments. Therefore, the part of $P_{right}$ inside the bottom-left quadrant of $t$ is at most $2t_x$. Similarly, the part of $P_{top}$ inside the bottom-left quadrant of $t$ is at most $2t_y$. Path $P$ is not longer than $t_x+t_y$ (see Figure~\ref{fig:competitionmesh}(d)) . Therefore, the length of $P'$ is at most  $t_x+t_y + 2 \cdot\max\{    t_x, t_y)\}$
 $\le \sqrt{2} \cdot dist_E(s,t) +2  \cdot\max\{ t_x, t_y\}$
 $\le (2+\sqrt{2}) \cdot dist_E(s,t)$.

In the case when $t$ belongs to $P_{right}$,  the length of path $P$ is zero and the proof easily follows.
\end{proof}

The following lemma states that a competition mesh can be constructed in $O(n\log n)$ time.
\begin{lemma} 
\label{lem:time}
For any set $P$ with $n$ points, the competition mesh $M(P)$ can be constructed in $O(n \log n)$ time. 
\end{lemma}
\begin{proof}
Define for each point $w\in P$, a set of eight non-overlapping  cones as follows: The central   angle of each cone is $45^\circ$  and the cones are ordered counter clockwise around $w$. The first cone lies in the first quadrant of $w$  between the lines $y=x+w_x$ and $y=0$, as shown in Figure~\ref{fig:construction}(a).   Guibas and Stolfi~\cite{GuibasS83} showed that in $O(n\log n)$ time, one  can find for every point $w\in P$ the nearest neighbor of $w$ (according to the Manhattan Metric) in each of the eight cones of $w$. Assume that $\delta_y = \{ \min_{\{a,b\}\in P \text{, where } a_y \not= b_y} |a_y - b_y|  \}$, which can be computed in $O(n \log n)$ time by sorting the points according to $y$-coordinates.

\begin{figure*}[pt]
  \centering
  \includegraphics[width=.7\textwidth]{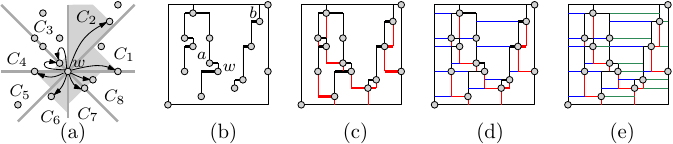}  
  \caption{(a) A point set $P$ and the nearest neighbor  of $w$ in each of the eight cones around $w$. (b--e) Construction of the mesh of $P$, while processing (b) top rays, (c) bottom rays, (d) left rays and (e) right rays.}
  \label{fig:construction}
\end{figure*} 
 
We construct $M$ in four phases. The first phase iterates through the top rays of the each point $w$ and   finds the point $w'$ closest to $l(h)_w$ (in Manhattan metric) such that  $|w'_x-w_x|\le |w'_y-w_y|$. Note that if  such a $w'$ exists,  then one of the horizontal rays  $r'$ of $w'$ would reach the point $(w_x,w'_y)$ before the top ray $r$ of $w$ (while all rays are grown in an uniform speed). According to the definition of the competition mesh, we can continue the ray $r'$ and stop growing the ray $r$.  If no such $w'$ exists, then $r$ must hit $R(P)$.

To find the point $w'$, it suffices to compare the Manhattan distances of the nearest neighbors in the second and third cones of $w$ (breaking ties arbitrarily). Since the nearest neighbors at each cone can be accessed in $O(1)$ time, we can process all the top rays in linear time. Figure~\ref{fig:construction}(b) shows the junctions and nodes created after the first phase, where all the rays that stopped growing are shown in thin lines.  The nearest neighbors in the second and third cones of $w$ are $a$ and $b$, respectively. Since  $a$ is closer to $l_h(w)$ than $b$, the top ray of $w$ does not grow beyond $(w_x,a_x)$. The second   phase  processes the bottom  rays  in a similar fashion, e.g., see Figure~\ref{fig:construction}(c). Both the first and second phase use a ray shooting data structure to check whether the current ray already hits an existing ray. We describe this data structure in the next paragraph while discussing phase three.  Let the planar subdivision at the end of phase two be $S$.

In the third phase, we   grow  each left ray until it hits any other  vertical segment in $S$, as follows:  For each vertical edge $ab$ in $S$, construct a segment $a'b'$ by shrinking $ab$ by $\delta_y/3$ from both ends. For each node and junction $w$, construct a segment $w_1w_2$ such that $w_1 = (w_x,w_y-\delta_y/4)$ and $w_2 =  (w_x,w_y+\delta_y/4)$. Let $S_v$ be the constructed segments.  Note that the segments in $S_v$ are non-intersecting. Giyora and Kaplan~\cite{GiyoraK09} gave a ray shooting data structure $D$ that can process $O(n)$ non-intersecting vertical rays in  $O(n\log n)$ time, and given   a query point $q$,  $D$ can find in $O(\log n)$ time the  segment (if any)  in $S_v$ immediately to the left of $q$.  Furthermore, $D$ supports insertion and deletion in $O(\log n)$ time. For each point $q\in P$ in the increasing order of $x$-coordinates, we shoot a leftward  ray $r$ from $q$, and find the first segment $ab$ hit by the ray. Assume that  $a_y<b_y$ and $r$ hits $ab$ at point $x$. We update the subdivision $S$ accordingly,  then delete segment $ab$ from $D$,  and insert  segment $xb$ in $D$. Note that these updates keep  all the segments in $D$ non-intersecting. Since there are $O(n)$ left rays, processing all these rays takes $O(n \log n)$ time. Figure~\ref{fig:construction}(d) illustrates the third phase.

The fourth   phase processes the right  rays  in a similar fashion, e.g., see  Figure~\ref{fig:construction}(e).  Since the preprocessing time of the data structures we use is $O(n\log n)$, and since each phase runs in $O(n \log n)$ time,  the construction of the computation mesh takes $O(n \log n)$ time in total. 
\end{proof}

\section{\uppercase{GraphMaps System}} 
\label{sec:cm}
Our technique for calculating the graphs $G_1,\dots, G_k$ (equivalently, node level assignment) is described in Section~\ref{sec:smooth}. For now, let us assume  that the sequence of graphs is ready. We now show how to route edges on graphs $G_i$.

The computation of edges starts from the bottom. Namely we build a competition mesh $M$ for graph $G (=G_k)$.  We route each edge $(u,v)\in G$ as a shortest path $P_{uv}$ in $M$. 
 Let us denote by $M'$ the mesh we obtain after applying these modifications to $M$. Next we modify $M'$ to make the routes more visually appealing. We perform local modifications and try to minimize the total ink of the routes, which is the sum of lengths of edges of $M'$ used in the routes~\cite{gansner2006improved}, and remove thin faces. During the modifications we keep the angular resolution greater or equal than some $\alpha >0$, and the minimum distance between non-incident vertices and edges of $M'$ greater or equal than some $\beta>0$. The local modifications are described below.

   \begin{figure}[pb]
  \centering
  \includegraphics[width=.4\textwidth]{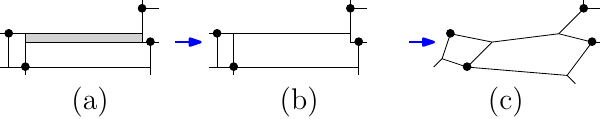}  
  \caption{(a--b) Removal of thin faces. (b--c) Moving junctions towards median. 
  }
  \label{fig:detour}
 \end{figure} 
 
   \begin{figure*}[pt]
  \centering
  \includegraphics[width=.75\textwidth]{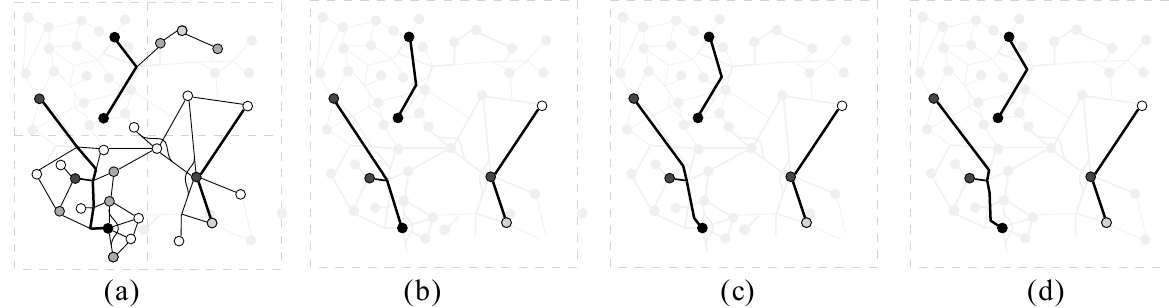}  
  \includegraphics[width=.75\textwidth]{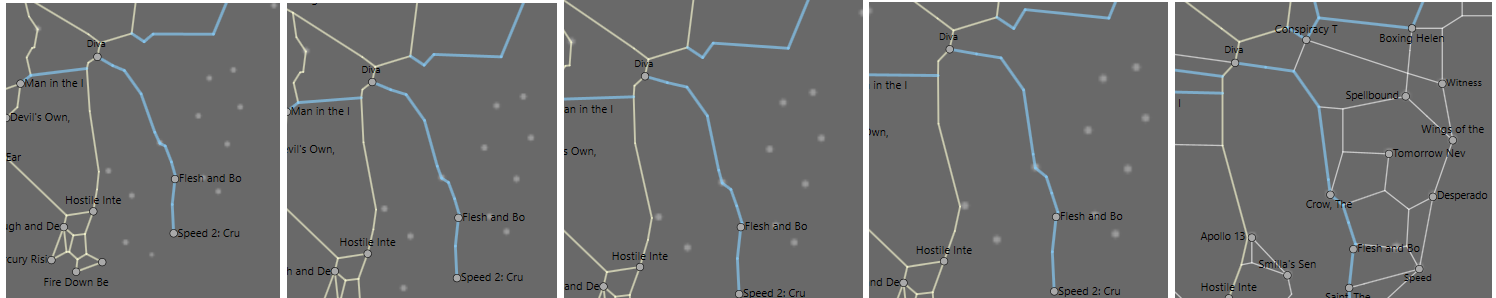}  
  \caption{(a) Zoom level 2. (b--d) Transition from level 1 to 2.
  (bottom) Transition in our GraphMaps system.
  }
  \label{fig:pathsim}
 \end{figure*} 

\smallskip
\textit{Face refinement:} For each face $f$ of $M'$ that does not contain a node of $G$ in its boundary, we compute the width of $f$, which is the smallest Euclidean distance between any two non-adjacent rails of $f$. If the width of $f$ is smaller than some given threshold, then remove the longest edge of $f$ from $M'$ (breaking ties arbitrarily). Figures~\ref{fig:detour}(a--b) depict such a removal, where the thin face is shown in gray. The edge routes using the removed edge are rerouted through the remaining boundary of $f'$.

\textit{Median:} Move each junction $\kappa$ of $M'$ toward the geometric median of its neighbors, i.e., the point that minimizes the sum of distance to the neighbors, as long as the restriction mentioned above holds.  Iterate the move for a certain number of times, or until the change becomes smaller than some given threshold. Figures~\ref{fig:detour}(c--d) illustrate the  outcome of this step.

\textit{Shortcut:} Remove every degree two junction and replace the two edges adjacent to it by the edge shortcutting the removed junction, as long as the restriction mentioned above holds. 
\smallskip

In all the above modifications, the routes are updated accordingly. Modifications ``Median'' and ``Shortcut'' diminish the ink.   The final $M'$ gives the geometry of the bottom-level drawing of $G$ in our version of GraphMaps.


 Given a mesh $M_i$ representing the geometry for the drawing of $G_i$, where $1<i\le k$, we construct $M_{i-1}$ from $M_i$ by removing from the latter the nodes $V(G_i) \setminus V(G_{i-1})$, and by removing the edges that are not used by any route $P_{u,v}$, where $(u,v)$ is an edge of $G_{i-1}$.
Some routes $P_{u,v}$ can be straightened in $M_{i-1}$. 
 We use the simplification algorithm~\cite{pathsim} to morph the paths of $M_{i}$ to paths of $M_{i-1}$.  Figures~\ref{fig:pathsim}(a--b) illustrate the simplification.  

This change in the edge routes geometry diminishes the consistency between the drawings of successive levels. To smoothen the differences while transiting from zoom level $i$ to $i+1$ we linearly interpolate between the paths of $M_i$ and $M_{i+1}$, as demonstrated in   Figures~\ref{fig:pathsim}(b--d).

The idea of path simplification and transition via linear interpolation enables us to construct GraphMaps in a bottom-up approach.  In fact, the above  strategy can be applied to transform any mesh generated from a set of 2D points to a GraphMaps visualization. 

\section{\uppercase{Computing Node Levels}}
\label{sec:smooth}
Let us consider in more details how the view changes when we zoom  by examining Figures~\ref{fig:transition}(a)--(d). On the top-left tile of Figure~\ref{fig:transition}(a), the user's viewport covers the whole graph, so $G_1$ is exposed. 
 In  Figure~\ref{fig:transition}(d), the user's viewport contains only the top-left tile of Figure~\ref{fig:transition}(a), and the visualization switches to graph $G_2$. Seven new nodes, which were not fully visible in zoom level 1,   become fully visible for the current viewport, as represented in light gray. If all of a sudden, a large number of new nodes become fully visible, then it may disrupt user's mental map.  Here we propose an algorithm to keep this change small.

   \begin{figure*}[pt]
  \centering
  \includegraphics[width=.75\textwidth]{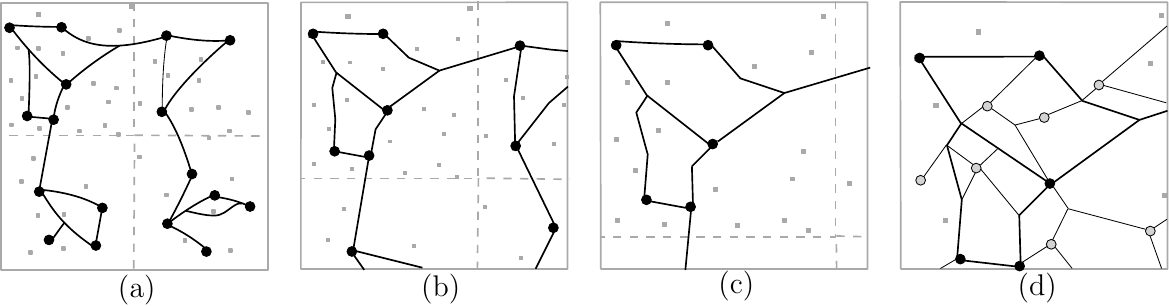}  
  \caption{(a) Zoom level 1. (b--c) Transition from level 1 to 2. (d) Zoom level 2.
  }
  \label{fig:transition}
 \end{figure*}   

We build the tiles as in~\cite{Nachmanson15}. In the first level we have only one tile coinciding with the graph bounding box. On the $i$th level, where $i>1$, the tiles  are obtained by splitting each tile in the $(i-1)$th level into a uniform $2\times 2$ grid cell. This arrangement of tiles can be considered as a rooted tree $T$, where the tiles correspond to the nodes of the tree. Specifically, the topmost tile is the root of $T$, and a node $u$ is a child of another node $v$ if the corresponding tiles $t_u$ and $t_v$ lie in two different but adjacent levels, and $t_u\subset t_v$. We refer to $T$ as a \emph{tile tree}.  

For every node $v$ in $T$, denote by $S(v)$ the number of fully visible nodes in the tile  $t_v$. For an edge $e=(v,w)$ in $T$, where $v$ is a parent of $w$, we denote by $\delta_e$ the number of new nodes that become visible while navigating from $t_v$ to $t_w$, i.e., $\delta_e = |S(w)\setminus S(v)|$. We can control the rate the nodes appear and disappear from the viewport  by  minimizing $\sum_{e\in E(T)}\delta_e^2$, where $E(T)$ is the set of edges in $T$.  For simplicity we show how to solve the problem in one dimension, where all the points are lying on a horizontal line. It is straightforward to extend the technique in $\mathbb{R}^2$.

\smallskip
\begin{enumerate}
\item[] \textit{Problem.} \textsc{Balanced Visualization}
\item[] \textit{Input.} A set $P$ of $n$ points  on a horizontal line, where every point $q\in P$ is assigned a rank $r(q)$. A  tile tree $T$ of height $\rho$; and a node quota $Q$, i.e., the number of points allowed to appear in each tile. 
\item[] \textit{Output.} Compute a mapping $g:P \rightarrow \{1,2,\ldots \rho\}$ (if exists) that 
\begin{enumerate} 
 \item[-] satisfies the node quota, 
 \item[-] minimizes  the objective  $F = \sum_{e\in E(T)}\delta_e^2$, and 
 \item[-] for every pair of points $q,q'\in P$ with $r(q) \ge r(q')$, satisfies the inequality $g(q)\le g(q')$, which we refer to as the \emph{rank condition}. 
\end{enumerate}
\end{enumerate}
\smallskip

If the rank of all the points are distinct, then the solution to \textsc{Balanced Visualization} is unique, and can be computed in a greedy approach. But the problem becomes non-trivial when many points may have the same rank. In this scenario, we prove that the \textsc{Balanced Visualization} problem can be solved in $O(\tau^2\log^2 \tau )+O(n \log n)$ time, where $\tau$ is the number of nodes in $T$. This is quite fast since the maximum zoom level is a small number, i.e., at most 10, in practice. 
 We reduce the problem to the problem of computing a minimum cost maximum flow problem, where the edge costs can be convex~\cite{Orlin,OrlinV13}, e.g., quadratic function of the flow passing through  the edge. Figure~\ref{fig:flow}(a) depicts a set of points on a line, where the associated tiles are shown using rectangular regions. The numbers in each rectangle is the number of points in the corresponding tile.  Figure~\ref{fig:flow}(b) shows a corresponding network $G$, where the source is denoted by $s$, and the sinks are denoted by $T_1,T_2,\ldots,T_8$. 
  \begin{figure*}[h]
  \centering
  \includegraphics[width=.8\textwidth]{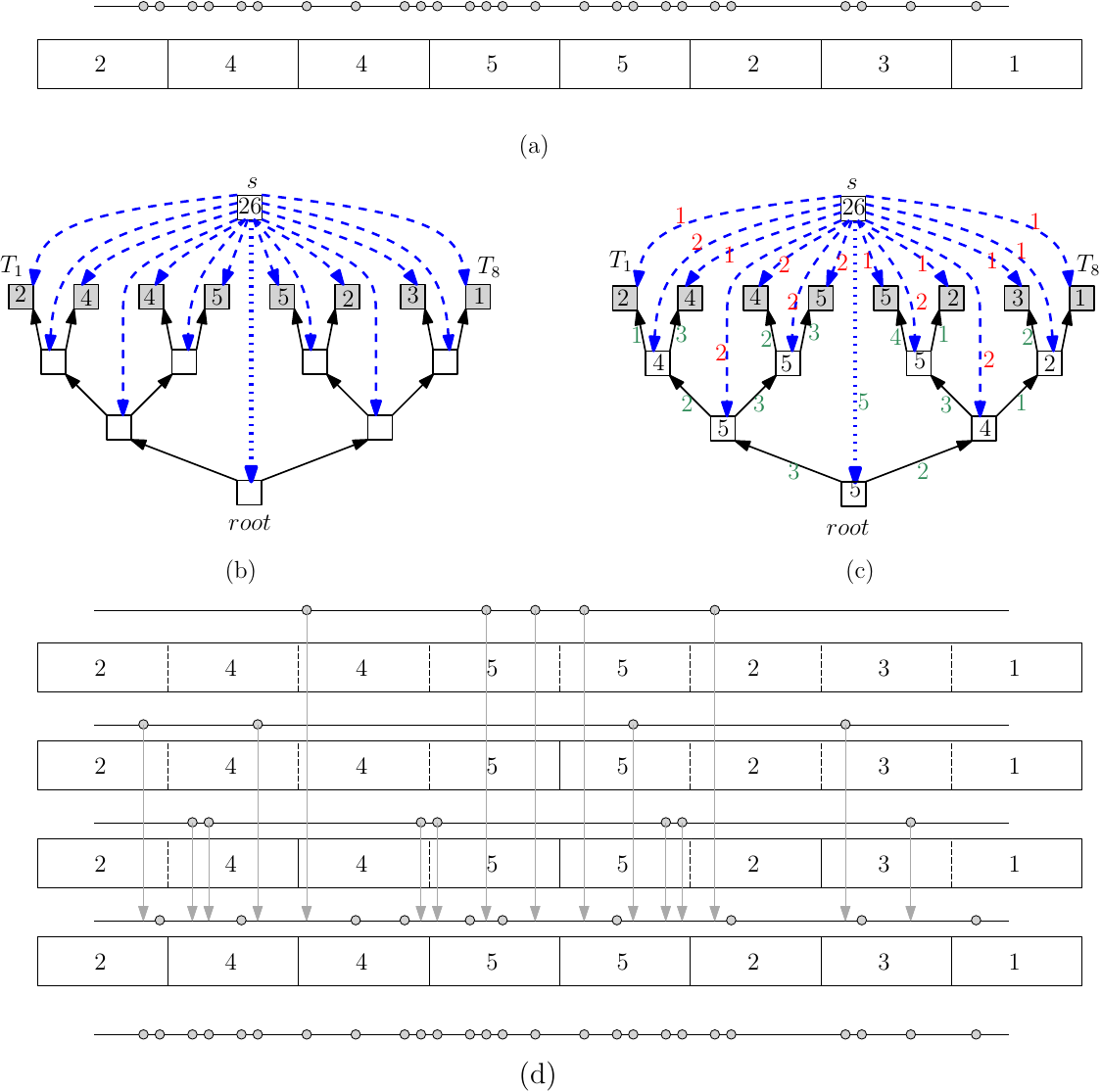}  
  \caption{(a) A set of points on a line and the associated tiles are shown using rectangular regions.  (b) A corresponding network $G$.   (c) A solution to the minimum cost maximum flow problem. (d) A solution to the \textsc{Balanced Visualization} problem that corresponds to the network flow.  
  }
   \label{fig:flow}
\end{figure*} 
 The excess at the source and the deficit at the sinks are written in their associated squares.  We allow  each internal node $w$ (unfilled square)  of the tile tree to pass at most $Q$ units of flow through it. This can be modeled by replacing $w$ by  an edge $(u,v)$ of capacity $Q$, where all the edges incoming to $w$ are incident to $u$ and the outgoing edges are incident to $v$. This transformation is not shown in the figure. Set the capacity of all other edges  to $\infty$.

The production of the source is $n$ units, and the units of flow that each sink can consume is equal to the number of points lying in the corresponding tile.  
 The cost of sending flows along the tree edges (solid black) is zero. The cost of sending flows along the dotted edge connecting the source and the tree root is also zero. The cost of sending $x$ units of flow along the dashed  edges is $x^2$; sending $x$ units of flow through a dashed edge corresponds to $x$ new nodes that are becoming visible when we zoom in at the tile of the edge target. Figure~\ref{fig:flow}(c) illustrates a solution to the minimum cost maximum flow problem, where the flows are interpreted as follows:

\smallskip
\begin{enumerate}
\item[(A)]  The number in a square denotes  the number of points that would be visible in the associated  tile.
\item[(B)]  The edges $(s,w)$, where $w$ is not the root,  are labeled by numbers. Each such number corresponds  to the new nodes that will appear while zooming in from the  tile associated to $parent(w)$ to the tile associated to $w$. Thus the cost of this network flow is the sum of the squares of these numbers, i.e., $35$.
\item[(C)]  Each edge of $(u,v)$ of $T$ is labeled by the number of nodes that are fully visible both $u$ and in $v$.  
\item[(D)]  Any one unit source-to-sink flow  corresponds to a point of $P$, where the flow path $source, $ $u_1,u_2,\ldots,u_k(=sink)$ denotes that the point appeared in all the tiles associated to $u_1,\ldots,u_k$. 
\end{enumerate}
\smallskip

\begin{lemma}
\label{lem:flow}
A minimum cost maximum flow in $G$ minimizes the objective function $F$ of the \textsc{Balanced Visualization} problem.
\end{lemma}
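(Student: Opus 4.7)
My plan is to prove Lemma~\ref{lem:flow} by exhibiting an explicit cost-preserving bijection between feasible mappings $g$ for the \textsc{Balanced Visualization} instance and feasible integer maximum flows in $G$, so that minimizing the flow cost is equivalent to minimizing $F$.

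First I would argue that every maximum flow in $G$ has value exactly $n$. The sink demands sum to $n$ (each point in $P$ lies in exactly one leaf tile), and the source has supply $n$, so saturating all sinks is necessary. Next, given a feasible mapping $g$ that respects the node quota, I would construct a feasible max flow as follows. For each point $q \in P$ with leaf tile $T_j$, let $w_q$ be the unique ancestor of $T_j$ in $T$ at level $g(q)$; route one unit of flow along the source-to-$w_q$ edge (dotted if $w_q$ is the root, dashed otherwise), then down the tree path from $w_q$ to $T_j$. The flow on the split edge at any internal tree node $v$ equals $|S(v)|$, which is at most $Q$ by the quota assumption, so the capacity constraints on $T$ are satisfied; all other edges have infinite capacity. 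The flow value is $n$, so it is a maximum flow.

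Next I would compute its cost. Under the construction, the flow $x_{s,w}$ on the source-to-$w$ edge equals the number of points with $g(q) = \text{level}(w)$ whose leaf tile lies under $w$, which is precisely the number of new nodes appearing when passing from $\text{parent}(w)$ to $w$, i.e.\ $\delta_e$ for $e = (\text{parent}(w), w)$. The dotted edge from $s$ to the root has zero cost, tree edges have zero cost, and each non-root source edge contributes $x_{s,w}^2$. Hence the total cost equals
\[
\sum_{w \neq \text{root}} x_{s,w}^2 \;=\; \sum_{e \in E(T)} \delta_e^2 \;=\; F.
\]

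For the converse, given any integer maximum flow of value $n$ in $G$, I would apply a standard flow decomposition into $n$ unit source-to-sink paths. Each such path enters the tree through exactly one source edge into some node $w$ and terminates at exactly one sink $T_j$ with $w$ an ancestor of $T_j$; assigning the point $q$ associated to that path (one of the $d_j$ points in the tile of $T_j$) the level $g(q) := \text{level}(w)$ produces a well-defined mapping. The capacity $Q$ on the split edge of each internal tree node $v$ enforces $|S(v)| \le Q$, so the node quota is respected. The same identity as above shows that the cost of the flow equals $F(g)$. Since minimum cost maximum flows exist for convex monotone arc costs on integer capacities and can be taken to be integral~\cite{Orlin,OrlinV13}, the minimum-cost maximum flow in $G$ realizes the minimum of $F$ over all quota-feasible mappings, which proves the lemma. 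The main subtlety I anticipate is arguing cleanly that the flow decomposition preserves integrality and that the matching of unit paths to individual points is consistent across both directions of the bijection; this is where explicit use of an integer min-cost max-flow algorithm for convex arc costs is essential.
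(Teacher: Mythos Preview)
Your proof is correct and rests on the same underlying correspondence the paper uses, but you carry it out more carefully. The paper's argument is a short contradiction: it notes that feasibility of an assignment is equivalent to the maximum flow having value $n$ (via a min-cut observation), and then argues only one direction explicitly---if a cheaper \textsc{Balanced Visualization} solution existed, labelling the network according to interpretations (A)--(D) would yield a cheaper maximum flow, contradicting minimality. The reverse direction (that the optimal flow actually yields a feasible assignment of the same cost) is not proved inside the lemma; the paper defers it to the level-by-level extraction procedure described immediately afterward. By contrast, you make both directions of the cost-preserving correspondence explicit: you build a flow from any quota-feasible mapping, and you recover a mapping from any integral maximum flow via path decomposition, invoking integrality of minimum-cost flows under convex arc costs to close the loop. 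This buys you a self-contained proof that does not rely on the post-lemma construction, at the price of having to manage the bookkeeping of matching unit flow paths to individual points---exactly the subtlety you flagged. One minor point: your opening claim that every maximum flow has value $n$ tacitly assumes feasibility; the paper's cut argument handles the infeasible case more directly, so you may want to adopt that phrasing.
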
 
\begin{proof}
If the amount of flow consumed at sink is smaller than $n$, then we can find a cut in $G$ with total capacity less than $n$. Thus even if we saturate the corresponding tiles with points from $P$, we will not be able to visualize all the points without violating the node quota. Therefore, we can visualize all the points if and only if the flow is maximum and the total consumption is $n$. Therefore, the only concern is whether the solution with cost $\lambda$ obtained from  flow-network model  minimizes the sum of the squared node differences between every parent and child tiles. Suppose for a contradiction that there exists another solution of \textsc{Balanced Visualization} with cost $\lambda' < \lambda$.  In this scenario we can label the edges of the network according to the interpretation used in (A)--(D) to obtain a maximum flow in $G$ with cost $\lambda'$. Therefore,  the minimum cost computed via the flow-network model cannot be $\lambda$, a contradiction.
\end{proof}

Given a solution to the network flow, we can construct a corresponding solution to the \textsc{Balanced Visualization} problem as described below. 

\smallskip
\begin{enumerate}
\item[-] For each point $w$, set $g (w) = \infty$. 
\item[-] For each zoom level $z$ from $\rho$ to 1, process the tiles of zoom level $z$ as follows. Let $W$ be a tile in zoom level $z$. Find the amount of flow $x$  incoming to $W$ from $s$ in $G$. Note that this  amount $x$ corresponds to the difference in the number of points between $W$ and $parent(W)$. Therefore, we find a set $L$ of $x$  lowest priority points in $W$ with zoom level equal to $\infty$, then for each $w\in L$, we set $g(w) = z$. Figure~\ref{fig:flow}(d) illustrates a solution to the \textsc{Balanced Visualization} problem that corresponds to the network flow of Figure~\ref{fig:flow}(c).
\end{enumerate}
\smallskip

\noindent
If the resulting mapping does not satisfy the rank condition,  then the instance of \textsc{Balanced Visualization} does not have any affirmative solution.  The best known running time for solving a convex cost network flow problem on a network of size $O(\tau)$ is $O(\tau^2 \log^2 \tau)$~\cite{Orlin,OrlinV13}. Besides, it is straightforward to compute the corresponding node assignments in $O(n\log n)$ time augmenting the merge sort technique with basic data structures. 
 Hence we obtain the following theorem.
\begin{theorem}
Given a set of $n$ points in $\mathbb{R}$,  a tile tree of $\tau$ nodes, and a quota $Q$, one can find a balanced visualization (if exists) in $O(\tau^2 \log^2 \tau) {+} O(n\log n)$ time.
\end{theorem}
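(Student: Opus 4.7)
The plan is to package the ingredients already assembled in the section into a clean two-phase algorithm and account for its running time. First I would build the flow network $G$ of Figure~\ref{fig:flow}(b) directly from the tile tree: for each node $w\in T$ introduce the tree edge, split every internal node into a capacity-$Q$ edge as described, add the source arc $(s,\mathrm{root})$ of zero cost, add a dashed arc $(s,w)$ of convex cost $x\mapsto x^2$ for every non-root $w$, and attach a sink $T_w$ at each leaf with demand equal to the number of points of $P$ that geometrically fall in the tile associated with $w$. Since every tile's point count can be obtained by a single left-to-right sweep of $P$ (after an $O(n\log n)$ sort by coordinate), the network has $O(\tau)$ nodes and arcs and can be constructed within the stated budget.

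Next I would invoke the convex-cost minimum-cost maximum-flow algorithm of Orlin~\cite{Orlin,OrlinV13} on $G$, which runs in $O(\tau^2 \log^2 \tau)$ time on a network of size $O(\tau)$. By Lemma~\ref{lem:flow}, a maximum flow exists and has total value $n$ iff the node-quota constraints admit any feasible mapping, and in that case the returned flow minimizes $F=\sum_{e\in E(T)}\delta_e^2$. If the flow value is less than $n$, I report that no balanced visualization exists (a cut of capacity $<n$ witnesses infeasibility).

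Given the optimal flow, I would then recover $g$ exactly as described in the bulleted procedure after Lemma~\ref{lem:flow}: traverse the tiles from the deepest level $\rho$ up to level $1$, and for each tile $W$ at level $z$ read off the flow value $x$ on the arc $(s,W)$ and assign $g(w)=z$ to the $x$ lowest-ranked points in $W$ that are still unassigned. To achieve the $O(n\log n)$ bound for this phase, I would sort $P$ once by $x$-coordinate so that the points inside any tile form a contiguous range, and maintain for each active tile a balanced binary heap (or an order-statistic tree) keyed by rank of its still-unassigned points; a tile at level $z$ then yields its $x$ smallest-rank points in $O(x\log n)$ time, and each point is extracted exactly once, giving $O(n\log n)$ total. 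Finally I would verify the rank condition by scanning pairs across level boundaries (again within $O(n\log n)$); if it fails, the problem is infeasible, otherwise $g$ is an optimal balanced visualization.

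The main obstacle, and the only step not essentially dictated by the preceding lemma, is justifying the $O(n\log n)$ bound for the recovery phase — in particular arguing that augmenting a mergesort-style traversal of the tile tree with order-statistic structures suffices to produce, in aggregate, the required per-tile extractions of lowest-ranked points while simultaneously checking the rank condition. Everything else follows: correctness of the flow reduction from Lemma~\ref{lem:flow}, and the flow running time directly from the cited convex min-cost flow algorithm.
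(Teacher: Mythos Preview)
Your proposal is correct and follows essentially the same approach as the paper: build the size-$O(\tau)$ flow network, invoke the convex-cost min-cost max-flow algorithm of~\cite{Orlin,OrlinV13} for the $O(\tau^2\log^2\tau)$ term, and then recover $g$ via the level-by-level extraction of lowest-priority points in $O(n\log n)$, checking the rank condition at the end. The paper is terser about the recovery phase (it just says ``augmenting the merge sort technique with basic data structures''), but your order-statistic / heap implementation is a reasonable instantiation of that sentence and yields the same bound.
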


While implementing GraphMaps, we need to choose a node quota $Q$  depending on the given total number of zoom levels $\rho$. Using a binary search on the number of nodes, in $O(\log n)$ iterations, one can find a minimal node quota that allow visualizing all the points of $P$ satisfying the rank condition.

\section{\uppercase{Experiments}}
\label{EXPERIMENTS}
The GraphMaps system proposed previously~\cite{Nachmanson15} uses  
\footnote{\href{https://www.cs.cmu.edu/~quake/triangle.html}{\url{https://www.cs.cmu.edu/~quake/triangle.html}}} to obtain the graph for routing edges on a level. Our approach does not depend on Triangle, but uses Competition Mesh. This has several advantages. For example,  Competition Mesh usually produces less edges than Triangle. As a result the edge routing runs much faster.  With the same initial layout for the nodes, the GraphMaps system based on our approach sped up the initial processing significantly, up to 8 times on some graphs. The graph with 38395 nodes and 85763 edges was processed in the new system within 1 hour and 45 minutes, where the previous GraphMaps system took 6 hours~\cite{Nachmanson15}. Besides, Competition Mesh is more robust than Triangle. We did not experience failures, which were reported on Triangle~\cite{Nachmanson15}.
\begin{figure}[pt]
\centering 
{\includegraphics[width=.45\columnwidth, height = 3cm]{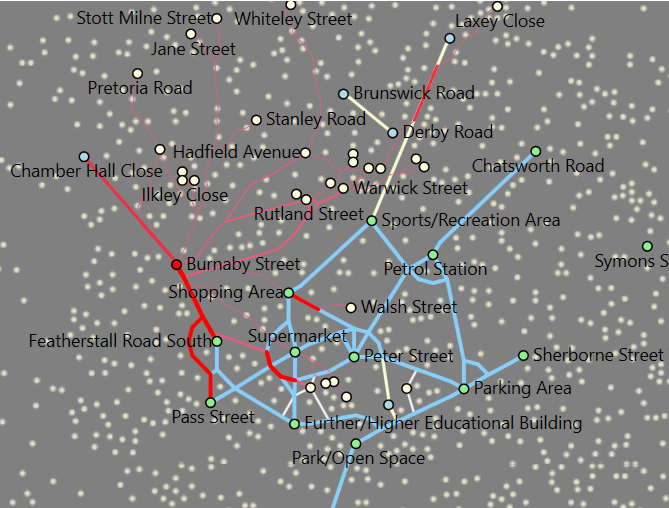}}
\hfill
{\includegraphics[width=.45\columnwidth, height = 3cm]{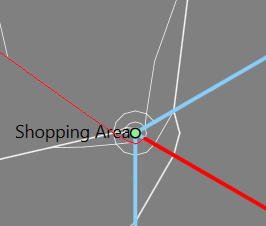}}
\caption{Node selection in  previous GraphMaps~\cite{Nachmanson15}. }
\label{det}
\end{figure} 
 
The previous GraphMaps~\cite{Nachmanson15} supports node selection, which is initiated by the user clicks. Selection of a node highlights the paths to its neighbors in red. This may create ambiguity. For example, Figure~\ref{det}(top-left) shows a graph of Burglary events (April 2015) in Manchester, UK, where two events are adjacent if they are located within 1km of each other. Selection of the node `Burnaby Street' highlights a rail very close to the node `Shopping area', which gives a false impression that these nodes are adjacent. After zooming in one can see that there is a detour that carries the highlighted path away from `Shopping area', e.g., see Figure~\ref{det}(top-right). This also give a wrong  impression of the node degree. Besides, since the edges may share rails, selection of multiple nodes may obscure the adjacency relationship, e.g., see Figure~\ref{pn}(left).

 \begin{figure}[pt]
\centering 
{\includegraphics[width=.45\columnwidth, height = 3cm]{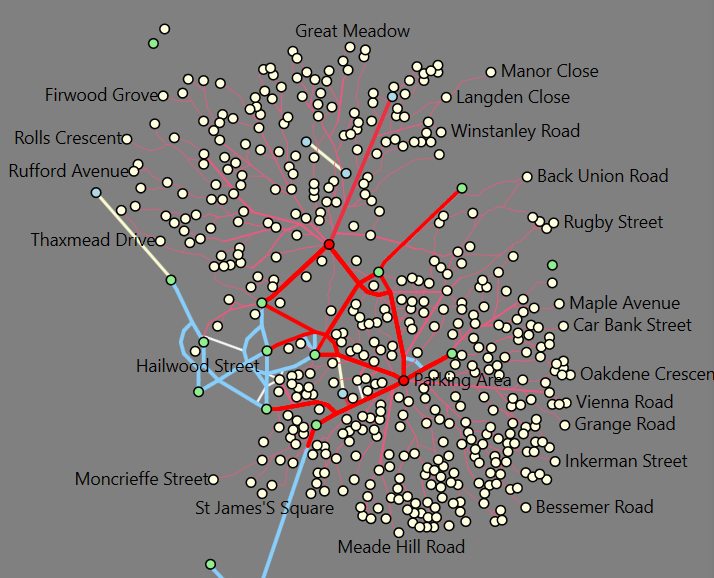}}
\hfill
{\includegraphics[width=.45\columnwidth, height = 3cm]{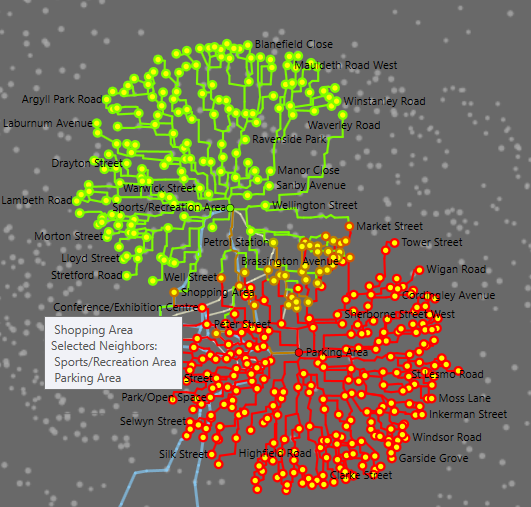}}
\caption{Selection of multiple nodes: (a) previous GraphMaps~\cite{Nachmanson15}, and (b) our approach.}
\label{pn}
\end{figure} 
 \begin{figure*}[h]
\centering 
{\includegraphics[width=\columnwidth, height = 6cm]{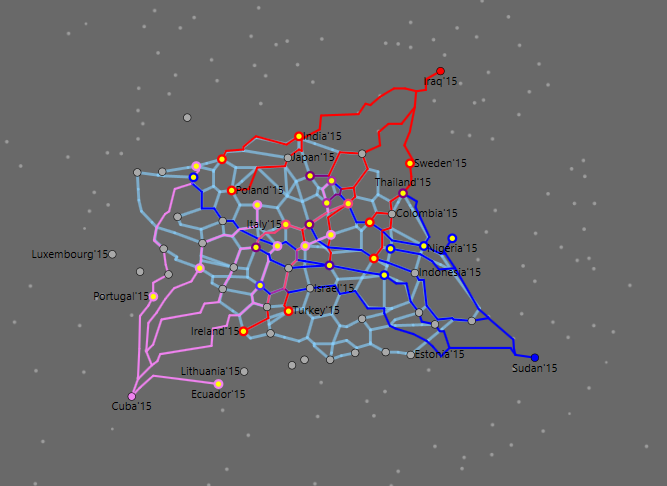}}
\hfill
{\includegraphics[width=\columnwidth, height = 6cm]{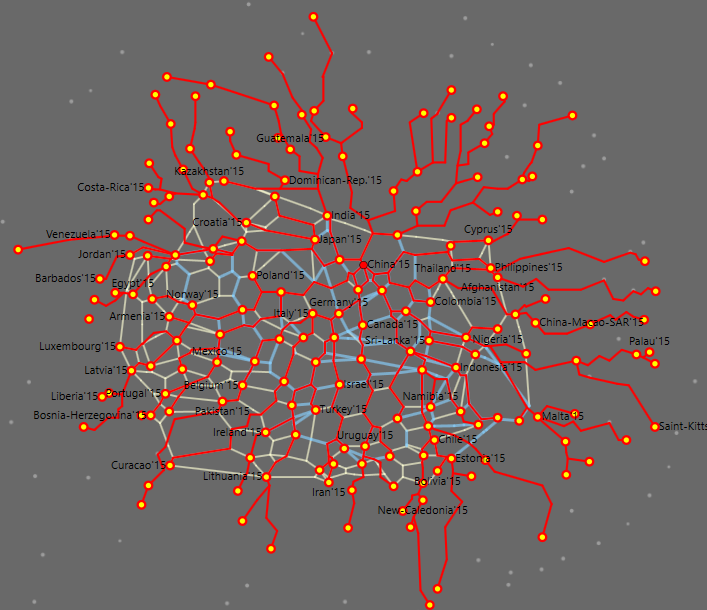}}
\caption{A visualization of gemstone trade-relation among countries ($\approx$7K edges) in 2015    (https://resourcetrade.earth). Selected nodes: (left) Iraq, Sudan, Cuba, and (right) China.}
\label{trade}
\end{figure*} 

We introduce new visualizations that allow  the user to better understand the node neighborhood. Clicking on a node toggles its status from selected to not selected. When a node is selected, its neighbors are highlighted in yellow color, and the edges connecting the clicked node with the neighbors are highlighted with some unique color. If the mouse pointer hovers over a node highlighted by yellow color, then a tooltip appears with the list of the node neighbors, e.g.,  see Figure~\ref{pn}(right). When a selected node is unselected, then every edge adjacent to it is rendered in the default color, and the highlighting is removed from each neighbor, unless it is a neighbor of another selected node. These visualization measures help to resolve some ambiguities caused by edge bundling.  
 Note that our approach does not create any detour, and thus avoids the circular artifacts (rails) around the nodes. Exploring the node adjacencies and degree becomes comparatively easy, and less number of rails aids faster level transition.

Like the previous GraphMaps, our approach can also revel the structural properties of the graph. 
 Figure~\ref{trade} depicts a gemstone trade graph, where the countries with most trade relation, e.g., China, are in the central position and the countries with small number of trade relations fall into the periphery.  Figure~\ref{Drug} visualizes a drugs-crime event   in Manchester, UK, where the events are connected if they are located within a distance of (left) 1Km  and (right) 5Km of each other. The high risk events form clusters in   the top-level visualizations.  
  
\begin{figure*}
\centering 
{\includegraphics[width=\columnwidth, height = 6cm]{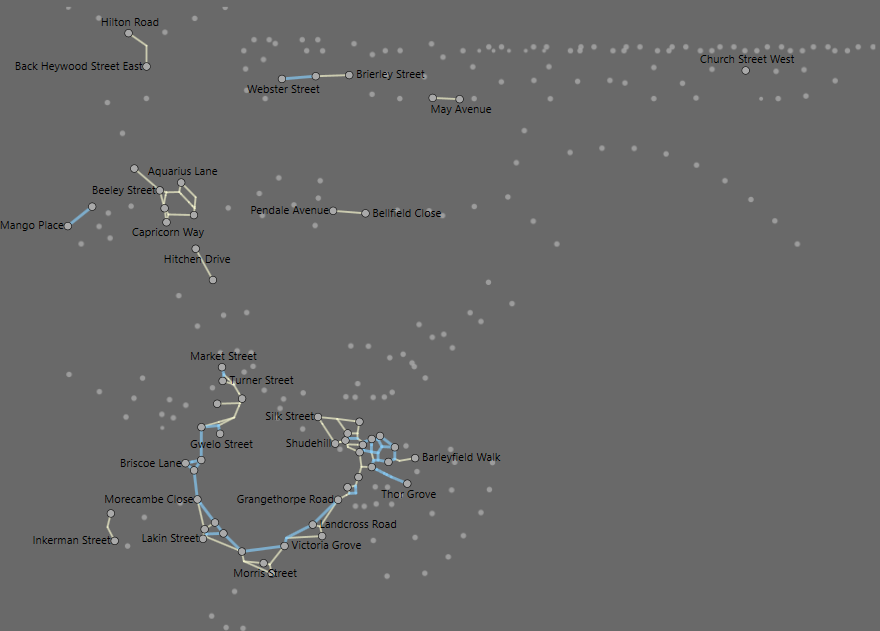}}
\hfill
{\includegraphics[width=\columnwidth, height = 6cm]{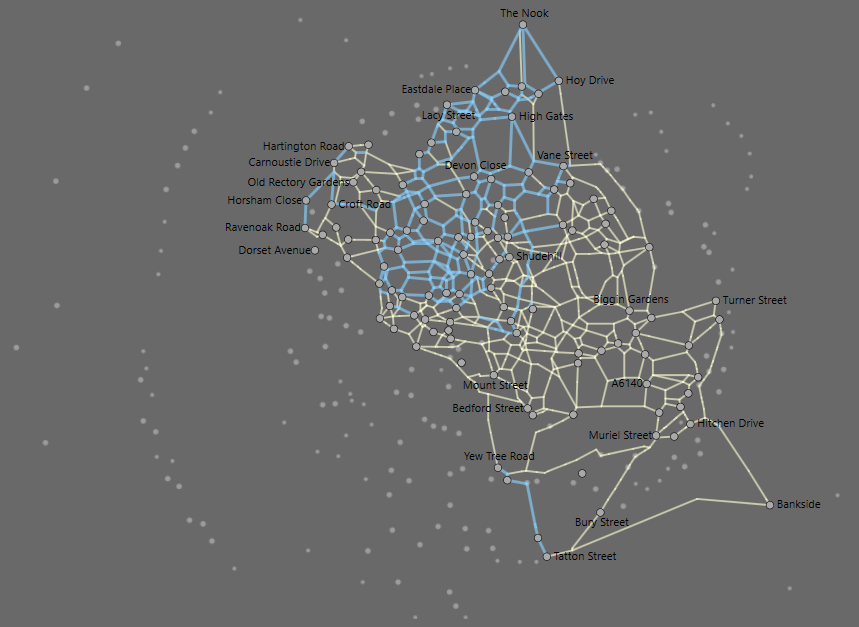}}
\caption{A visualization of drugs-crime event   in Manchester, UK, with approximately (left) 0.5K edges, and (right) 8K edges. The rails in light blue and white illustrate the first and second zoom levels, respectively.}
\label{Drug}
\end{figure*}

Figure~\ref{fig:z3incl} shows an experiment  with the graph of include dependencies of the C++ sources of\footnote{\href{https://github.com/Z3Prover/z3}{\url{https://github.com/Z3Prover/z3}}}.  The developer was interested in how his files are used by the rest of the system. He clicked node `lar\_solver.h' representing an important header file of his files and created the neighborhood in red color (the upper drawing). Then he noticed that file `theory\_lra.cpp' includes `lar\_solver.h' and clicked the former, creating the blue neighborhood (in the lower drawing). Then he noticed two files, marked by the black oval, that were included into `theory\_lra.cpp' by mistake. 

In another experiment, a user was analyzing collaboration between Chinese and Russian composers in 20-th and 21-st centuries. By highlighting the neighborhoods of Chinese composers of 20-th century, the user saw that there were no connections between the composers of these two countries in this period. In 21-st century the only relation of such kind that he found was between Tan Dun\footnote{  \href{https://en.wikipedia.org/wiki/Tan\_Dun}{\url{https://en.wikipedia.org/wiki/Tan\_Dun}}} and Sofia Gubaidulina\footnote{\href{https://en.wikipedia.org/wiki/Sofia\_Gubaidulina}{\url{https://en.wikipedia.org/wiki/Sofia\_Gubaidulina}}}. 
\begin{figure*}[pt]
\centering 
{\includegraphics[width=\columnwidth, height = 2.5cm]{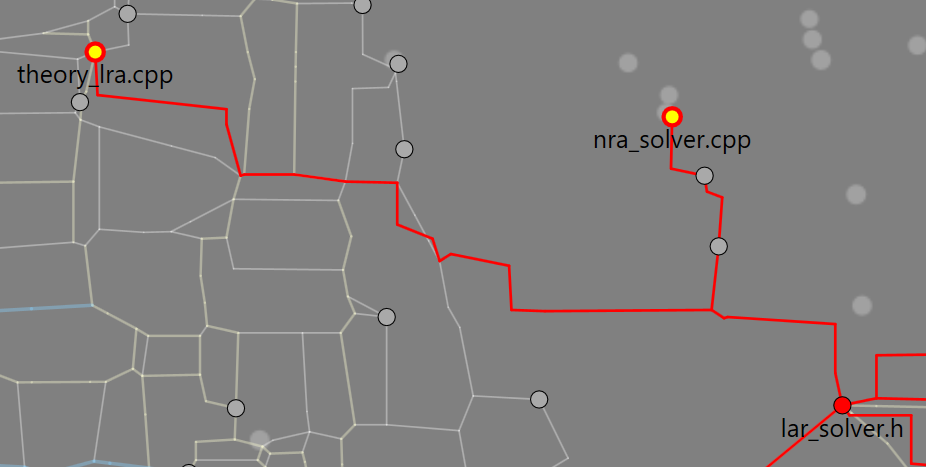}}
\hfill
{\includegraphics[width=\columnwidth, height = 2.5cm]{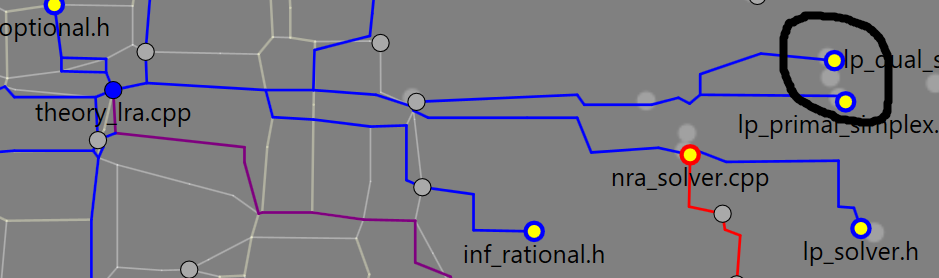}}
\caption{Highlighting the neighborhood in a unique color helps to understand relations.}
\label{fig:z3incl}
\end{figure*} 
More details can be seen in a video\footnote{\href{https://www.youtube.com/watch?v=qCUP20dQqBo&feature=youtu.be}{\url{https://www.youtube.com/watch?v=qCUP20dQqBo&feature=youtu.be}}}.

\section{\uppercase{Limitations}}
Adjacency relations and node degrees are readily visible in small size traditional node-link diagrams, GraphMaps can process large networks, but it looses those two aspects at an expense of  avoiding clutter. The users need to select nodes to explore the adjacencies and node degrees. Currently, we use colors to disambiguate node selections, which limits us to the selection of a small number of nodes avoiding ambiguity. GraphMaps is sensitive to node quota or the maximum number of nodes per tile. Selecting a large node quota may increase the interaction latency during level transitions. On the other hand, selecting a small node quota may select few nodes on the top-level, which may fail to give an overview of the graph structure. An  appropriate choice of the node quota based on the graph size and node layout is yet to discover. 
For simplicity, we used polygonal chains to represent the edges, different colors for multiple node selections, and   color transparency to avoid ambiguity. It would be interesting to find ways of improving the visual appeal of a GraphMaps visualization, e.g., using splines for edges, enabling tooltip texts for showing quick information and so on.

\section{\uppercase{Conclusion}}

We described our algorithm to construct GraphMaps Visualizations using competition mesh. Recall that the edge stretch factor of the competition mesh we created is at most $(2+\sqrt{2})$. A natural open question is to establish  tight bound on the edge stretch factor of the competition mesh. It would also be interesting to find bounded degree spanners (possibly with Steiner points) that are monotone and have low stretch factor. We refer the reader to~\cite{BoseS13,DBLP:journals/jgaa/DehkordiFG15,DBLP:conf/compgeom/FelsnerIKKMS16} for more details on such  geometric mesh and spanners. We leave it as a future work to examine how the quality of a GraphMaps system may vary depending on the choice of geometric mesh.

\begin{figure}[h]
  \centering 
  {\includegraphics[width=.45\textwidth]{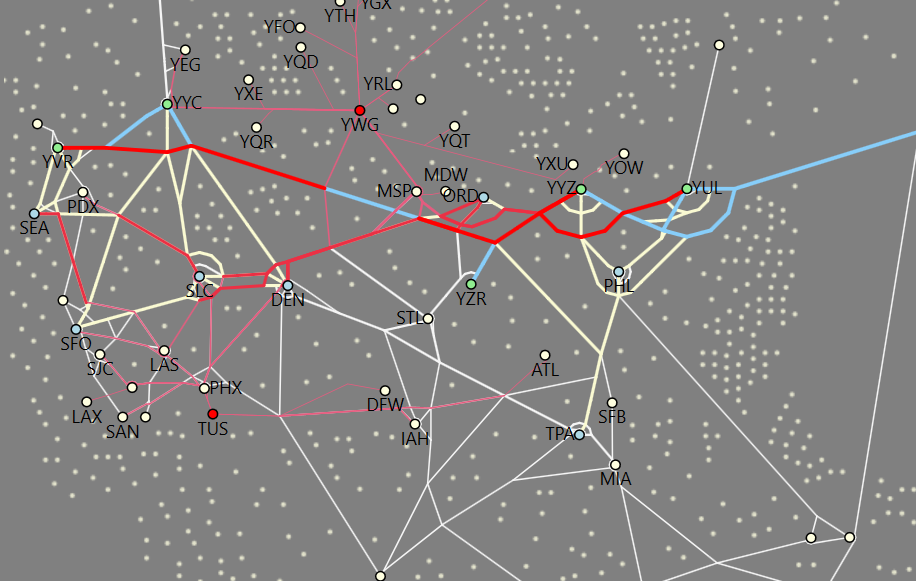}}  
  \caption{A visualization of the flight network dataset (https://openflights.org/data.html)  using previous GraphMaps~\cite{Nachmanson15}.}
   \label{abs2} 
\end{figure}

The previous GraphMaps~\cite{Nachmanson15} uses an incremental mesh generation, which does not require path simplification. Since the construction of the upper levels does not take the lower level nodes into account, the top-level view is usually sparse, e.g., see Figure~\ref{abs2}. Our approach is powerful in the sense that any mesh can be transformed into a GraphMaps visualization. But the upper levels are the  simplification of the bottom level mesh, and thus the quality of the top-level depends on both the bottom level mesh and the simplification process. It will be interesting to further explore the pros and cons of both approaches. We believe that our results will inspire further research to enhance the appeal and usability of GraphMaps visualizations.


\section*{\uppercase{Acknowledgements}}

\noindent This work was initiated when the first author was a summer intern at Microsoft Research, Redmond, USA. His subsequent work was partially supported by NSERC.

\bibliographystyle{apalike}
{\small
\bibliography{bibs}}

\vfill
\end{document}